\title{Improved Complexity Analysis of Quasi-Polynomial Algorithms Solving Parity Games}
\author{Paweł Parys\orcidID{0000-0001-7247-1408}\thanks{Author supported by the National Science Centre, Poland (grant no.\@ 2021/\allowbreak 41/\allowbreak B/\allowbreak ST6/\allowbreak 03914).} \and
	Aleksander Wiącek}
\institute{Institute of Informatics, University of Warsaw, Poland}
\newcommand{\Gg}{\mathcal{G}}
\newcommand{\Oo}{\mathcal{O}}
\newcommand{\Tt}{\mathcal{T}}
\newcommand{\Nat}{\mathbb{N}}
\newcommand{\Even}{\mathsf{Even}}
\newcommand{\Odd}{\mathsf{Odd}}
\newcommand{\ceil}[1]{\lceil#1\rceil}
\newcommand{\floor}[1]{\lfloor#1\rfloor}
\newcommand{\fk}{f}
\newcommand{\Fk}{F}
\crefname{equation}{Equation}{Equations}
\begin{document}

\maketitle

\begin{abstract}
	We improve the complexity of solving parity games (with priorities in vertices) for $d=\omega(\log n)$ by a factor of $\Theta(d^2)$:
	the best complexity known to date was $\Oo(mdn^{1{.}45+\log_2(d/\log_2 n)})$,
	while we obtain $\Oo(mn^{1{.}45+\log_2(d/\log_2 n)}/d)$,
	where $n$ is the number of vertices, $m$ is the number of edges, and $d$ is the number of priorities.

	We base our work on existing algorithms using universal trees, and we improve their complexity.
	We present two independent improvements.
	First, an improvement by a factor of $\Theta(d)$ comes from a more careful analysis of the width of universal trees.
	Second, we perform (or rather recall) a finer analysis of requirements for a universal tree:
	while for solving games with priorities on edges one needs an $n$-universal tree,
	in the case of games with priorities in vertices it is enough to use an $n/2$-universal tree.
	This way, we allow solving games of size $2n$ in the time needed previously to solve games of size $n$;
	such a change divides the quasi-polynomial complexity again by a factor of $\Theta(d)$.

	\keywords{Parity games \and Universal trees \and Quasi-polynomial time}
\end{abstract}

\section{Introduction}

	Parity games have played a fundamental role in automata theory, logic, and their applications to verification and synthesis since early 1990s.
	The algorithmic problem of finding the winner in parity games can be seen as the algorithmic back end to problems in the automated verification and controller synthesis.
	It is polynomial-time equivalent to the emptiness problem for nondeterministic automata on infinite trees with parity acceptance conditions, and to the model-checking problem for modal $\mu$-calculus~\cite{EJS01}.
	It lies also at the heart of algorithmic solutions to the Church’s synthesis problem~\cite{RabinBook}.
	Moreover, decision problems for modal logics like validity or satisfiability of formulae in these logics can be reduced to parity game solving.
	Some ideas coming originally from parity games allowed obtaining new results concerning translations between automata models for $\omega$-words~\cite{parity2AWA,parity2AWA-univ},
	as well as concerning relatively far areas of computer science, like Markov decision processes~\cite{FearnleyMDP} and linear programming~\cite{FHZ-simplex,Friedmann-Zadeh}.

	The complexity of solving parity games, that is, deciding which player has a winning strategy, is a long standing open question.
	The problem is known to be in $\mathsf{UP}\cap\mathsf{coUP}$~\cite{up-co-up}
	(a subclass of $\mathsf{NP}\cap\mathsf{coNP}$)
	and the search variant (i.e., to find a winning strategy) is in \textsf{PLS}, \textsf{PPAD}, and even in their subclass \textsf{CLS}~\cite{Daskalakis-Papadimitriou}.
	The study of algorithms for solving parity games has been dominated for over two decades by algorithms
	whose run-time was exponential in the number of distinct priorities~\cite{Zielonka,BCJLM97,Seidl96,old-progress-measure,strategy-improvement,Schewe-big-steps,priority-promotion},
	or mildly subexponential for a large number of priorities~\cite{randomized-subexponential,subexponential}.
	The breakthrough came in 2017 from Calude, Jain, Khoussainov, Li, and Stephan~\cite{calude}
	who gave the first quasi-polynomial-time algorithm using the novel idea of \emph{play summaries}.
	Several other quasi-polynomial-time algorithms were developed soon after~%
	\cite{progress-measure,ordered-qpt,Lehtinen,Zielonka-Parys,Zielonka-Parys-journal,Strahler-number,symmetric-lifting,Parysian-flair}.

	Fijalkow~\cite{fijalkow} made explicit the concept of \emph{universal trees} that is implicit in the \emph{succinct tree coding} result of Jurdziński and Lazić~\cite{progress-measure}.
	It was then observed that universal trees are not only present, but actually necessary in all existing quasi-polynomial-time approaches for solving parity games~%
	\cite{universal-trees,Zielonka-Parys-universal,fixpoints}.
	Namely, it was shown that any algorithm solving a parity game with $n$ vertices, and following existing approaches,
	needs to operate on an $n$-universal tree (as defined in the sequel).
	There is, however, a catch here: this necessity proof is for parity games with priorities on edges, while quite often one considers less succinct games with priorities in vertices.
	(A general method for switching from priorities on edges to priorities in vertices is to replace each vertex by $d$ vertices, one for each priority,
	and to redirect every edge to a copy of the target vertex having the appropriate priority; then the number of vertices changes from $n$ to $nd$.)

	The main contribution of this paper lies in a finer analysis of the width of $n$-universal trees.
	We improve the upper bound on this width (and thus also the upper bound on the running time of algorithms using such trees) by a factor of $\Theta\left(\frac{d}{\log n}\right)$,
	where $d$ is the number of priorities.

	Then, a second improvement is obtained by recalling from Jurdziński and Lazić~\cite{progress-measure} that
	in order to solve parity games with priorities in vertices it is enough to use an $\floor{n/2}$-universal tree
	instead of an $n$-universal trees, exploiting the ``catch'' mentioned above.
	This allows solving games of size $2n$ in the time needed previously to solve games of size $n$
	and, in consequence, improves the complexity of solving parity games (with priorities in vertices) once again by a factor of $\Theta\left(\frac{d}{\log n}\right)$.

	Combining the two improvements, we decrease the upper bound on the complexity of solving parity games with $d=\omega(\log n)$ by a factor of $\Theta\left(\frac{d^2}{\log^2 n}\right)$:
	the best bound known to date~\cite{fijalkow,progress-measure}, namely $\Oo(mdn^{1{.}45+\log_2(d/\log_2 n)})$, is decreased to the bound $\Oo(mn^{1{.}45+\log_2(d/\log_2 n)}/d)$.
	We remark that both bounds do not display polylogarithmic factors; they are dominated by $n^{\Oo(1)}$,
	where the $\Oo(1)$ comes from the difference between $1{.}45$ and the actual constant $\log_2 e$ which should appear in the exponent
	(the same style of writing the bound is employed in prior work).
	Thus, while writing the bounds in such a form, the improvement is by a factor of $\Theta(d^2)$.
	Simultaneously, the two observations become too weak to improve asymptotics of the complexity in the case of $d=\Oo(\log n)$.

\section{Preliminaries}

\paragraph{Parity games.}

	A \emph{parity game} is a two-player game between players Even and Odd
	played on a \emph{game graph} defined as a tuple $\Gg=(V,V_\Even,E,d,\pi)$,
	where $(V,E)$ is a nonempty finite directed graph in which every vertex has at least one successor;
	its vertices are labelled with positive integer \emph{priorities} by $\pi\colon V\rightarrow\{1,2,\dots,d\}$ (for some \emph{even} number $d\in\mathbb{N}$), and
	divided between vertices $V_\Even$ \emph{belonging to Even} and vertices $V_\Odd=V\setminus V_\Even$ \emph{belonging to Odd}.
	We usually denote $|V|$ by $n$ and $|E|$ by $m$.

	Intuitively, the dynamics of the game are defined as follows.
	The play starts in a designated starting vertex.
	Then, the player to whom the current vertex belongs selects a successor of this vertex, and the game continues there.
	After an infinite time, we check for the maximal priority visited infinitely often; its parity says which player wins.

	Formally, we define the winner of a game using positional (i.e., memoryless) strategies.
	An \emph{Even's positional strategy} is a set $\sigma\subseteq E$ of edges such that
	for every vertex $v$ of Even, in $\sigma$ there is exactly one edge starting in $v$,
	and for every vertex $v$ of Odd, in $\sigma$ there are all edges starting in $v$.
	An \emph{Odd's positional strategy} is defined by swapping the roles of Even and Odd.
	An Even's (Odd's) positional strategy $\sigma$ is winning from a vertex $v$ if for every infinite path in the subgraph $(V,\sigma)$,
	the maximal priority occurring infinitely often on this path is even (odd, respectively).
	We say that Even/Odd \emph{wins} from $v$ if Even/Odd has a positional strategy winning from $v$.

	The above definition accurately describes the winning player due to \emph{positional determinacy} of parity games:
	from every vertex $v$ of a parity game, one of the players, Even or Odd, wins from $v$~\cite{positional-determinacy}.
	It follows that if a player can win from a vertex by a general strategy (not defined here), then he can win also by a positional strategy.

\paragraph{Trees and universal trees.}

	In progress measure algorithms~\cite{progress-measure,fijalkow,Strahler-number} strategies are described by mappings from game graphs to ordered trees, defined as follows:
	an \emph{ordered tree} (or simply a \emph{tree}) $T$ of height $h$ is a finite connected acyclic graph, given together with a linear order $\leq_x$ for every node $x$ thereof, such that
	\begin{compactitem}
	\item	there is exactly one node with in-degree $0$, called a \emph{root}; every other node has in-degree $1$;
	\item	for every \emph{leaf} (i.e., a node of out-degree $0$), the unique path from the root to this leaf consists of $h$ edges;
	\item	$\leq_x$ is a linear order on the \emph{children} of $x$ (i.e., nodes to which there is an edge from $x$),
		which describes the left-to-right ordering of these children.
	\end{compactitem}
	The \emph{width} of a tree $T$ is defined as its number of leaves and denoted $|T|$.

	Let $T_1$ and $T_2$ be two ordered trees of the same height.
	We say that $T_1$ \emph{embeds} into $T_2$ if there is an injective mapping $f$ preserving the child relation and the $\leq_x$ relations:
	\begin{compactitem}
	\item	if $y$ is a child of $x$ in $T_1$, then $f(y)$ is a child of $f(x)$ in $T_2$, and
	\item	if $y\leq_x z$ in $T_1$, then $f(y)\leq_{f(x)} f(z)$ in $T_2$.
	\end{compactitem}
	
	\begin{figure}
		\centering
		\def\svgscale{0.5}\import{pics/}{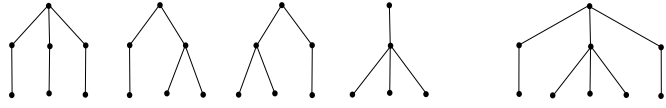}\vspace{-1ex}
		\caption{All four trees of height 2 and width 3 (left); a 3-universal tree of height 2 (right)}
		\label{fig:1}
	\end{figure}

	A tree $\Tt$ of height $h$ is \emph{$n$-universal} if every ordered tree of height $h$ and width at most $n$ embeds into $\Tt$.
	Consult \cref{fig:1} for an example.

\section{On the width of universal trees}

	In this section we prove an improved upper bound on the width of universal trees, as described by the following theorem:

	\begin{theorem}\label{thm:upper-bound}
		For all $n,h\in\Nat$, where $n\geq 1$, there exists an $n$-universal tree of height $h$ and width at most $\fk(n,h)$, where
		\begin{align*}
			\fk(n, h) \leq n\cdot\binom{h - 1 + \floor{\log_2 n}}{\floor{\log_2 n}}
				\leq n^{2{.}45-\varepsilon + \log_2 \left( 1 + \frac{h - 1}{\log_2 n} \right)}
		\end{align*}
		for some $\varepsilon>0$.
		Additionally, $\fk(n,h)=\Oo(n^{2{.}45-\varepsilon+\log_2(h/\log_2 n)})$ if $h=\omega(\log n)$.
	\end{theorem}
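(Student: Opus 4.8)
The plan is to exhibit the classical succinct universal tree of Jurdziński and Lazić, but with a careful choice of its bottom level, and then to analyse its width tightly; the content lies entirely in that analysis, the construction being standard. It is convenient to work also with \emph{$n$-universal forests of height $h$}: sequences of height-$h$ trees into which every sequence $T_1,\dots,T_k$ of height-$h$ trees with $\sum_i|T_i|\le n$ embeds, order-preservingly. I would define trees $U(n,h)$ and forests $L(n,h)$ by mutual recursion: $U(n,0)$ is a single node, $L(n,0)$ is a sequence of $n$ leaves, $L(0,h)$ is the empty forest; and for $h\ge1$, $U(n,h)$ is a root whose children form $L(n,h-1)$, while $L(n,h)$ is the concatenation of $L(\floor{n/2},h)$, then $U(n,h)$, then $L(\floor{n/2},h)$, in this order. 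Writing $\fk(n,h):=|U(n,h)|$, one reads off $\fk(n,1)=|L(n,0)|=n$ and, for $h\ge2$, $\fk(n,h)=|L(n,h-1)|=2\fk(\floor{n/2},h)+\fk(n,h-1)$.

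Next I would verify that $U(n,h)$ is $n$-universal, proved simultaneously with the claim that $L(n,h)$ is an $n$-universal forest, by induction on $h$ and, within each $h$, by a secondary induction on $n$; this ordering is what keeps the argument non-circular, since $L(n,h)$ is built from $L(\floor{n/2},h)$ at the same height but from $L(n,h-1)$ at a smaller height. The one step requiring thought is the inductive step for $L(n,h)$ with $n\ge2$: given a forest $T_1,\dots,T_k$ of height $h$ and total width $w\le n$, if $w\le\floor{n/2}$ it embeds into the leftmost copy of $L(\floor{n/2},h)$; otherwise let $p$ be least with $\sum_{i\le p}|T_i|>\floor{n/2}$, so that $\sum_{i<p}|T_i|\le\floor{n/2}$ (the prefix embeds into the left copy), $|T_p|\le n$ (so $T_p$ embeds into $U(n,h)$, using its $n$-universality, which in turn uses the outer hypothesis on $L(n,h-1)$), and $\sum_{i>p}|T_i|=w-\sum_{i\le p}|T_i|<n-\floor{n/2}=\ceil{n/2}$, hence $\le\floor{n/2}$ (the suffix embeds into the right copy); concatenating the three embeddings preserves the left-to-right order. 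I expect this floor/ceiling bookkeeping, together with arranging the double induction, to be the main — though not deep — obstacle.

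For the width I would iterate $\fk(n,h)=2\fk(\floor{n/2},h)+\fk(n,h-1)$ down to height $1$ (for $h\ge2$; the case $h=1$ is $\fk(n,1)=n=n\binom{\floor{\log_2 n}}{\floor{\log_2 n}}$ directly). Grouping the resulting $2^{\,i_1+\dots+i_{h-1}}\,\fk\!\bigl(\floor{n/2^{\,i_1+\dots+i_{h-1}}},1\bigr)$-terms by $j=i_1+\dots+i_{h-1}$ — there being $\binom{j+h-2}{h-2}$ compositions of $j$ into $h-1$ nonnegative parts, and $\fk(m,1)=m$ — gives
\[\fk(n,h)=\sum_{j\ge0}2^{j}\floor{n/2^{j}}\binom{j+h-2}{h-2}=\sum_{j=0}^{\floor{\log_2 n}}2^{j}\floor{n/2^{j}}\binom{j+h-2}{h-2},\]
the terms with $2^{j}>n$ vanishing. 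Using $2^{j}\floor{n/2^{j}}\le n$ and the hockey-stick identity $\sum_{j=0}^{b}\binom{j+h-2}{h-2}=\binom{b+h-1}{b}$ with $b=\floor{\log_2 n}$ yields the first inequality $\fk(n,h)\le n\binom{h-1+\floor{\log_2 n}}{\floor{\log_2 n}}$; it is attained when $n$ is a power of $2$, so no slack is introduced here, which is the point of the "careful analysis''.

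For the second inequality (take $n\ge2$, so $b\ge1$; for $n=1$ it is the trivial $1\le1$) I would apply $\binom{a+b}{b}\le\bigl(e(1+a/b)\bigr)^{b}$ with $a=h-1$. Since $e^{b}\le 2^{(\log_2 e)\log_2 n}=n^{\log_2 e}$ with $\log_2 e<1{.}45$, and since $t\mapsto t\log_2\!\bigl(1+\tfrac{h-1}{t}\bigr)$ is nondecreasing for $t>0$ (which reduces to $\ln(1+x)\ge\frac{x}{1+x}$), so that $\bigl(1+\tfrac{h-1}{b}\bigr)^{b}\le\bigl(1+\tfrac{h-1}{\log_2 n}\bigr)^{\log_2 n}=n^{\log_2(1+(h-1)/\log_2 n)}$ using $b\le\log_2 n$, multiplying the three factors gives $\fk(n,h)\le n^{\,1+\log_2 e+\log_2(1+(h-1)/\log_2 n)}=n^{\,2{.}45-\varepsilon+\log_2(1+(h-1)/\log_2 n)}$ with $\varepsilon:=1{.}45-\log_2 e>0$. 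Finally, when $h=\omega(\log n)$ we have $\log_2\!\bigl(1+\tfrac{h-1}{\log_2 n}\bigr)=\log_2\!\bigl(\tfrac{h}{\log_2 n}\bigr)+\log_2\!\bigl(1+\tfrac{\log_2 n-1}{h}\bigr)$, where the last summand is $o(1)$ and hence, for all sufficiently large $n$, below $\varepsilon/2$; since the remaining (bounded) cases only affect the implied constant, this gives $\fk(n,h)=\Oo\!\bigl(n^{\,2{.}45-\varepsilon/2+\log_2(h/\log_2 n)}\bigr)$, which is of the asserted form.
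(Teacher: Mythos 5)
Your proof is correct and establishes the same conclusion, but by a genuinely different route, and it is worth recording where the two diverge. The paper cites Fijalkow's construction and takes his recurrence $\fk(n,h)=\fk(n,h-1)+\fk(\floor{n/2},h)+\fk(n-1-\floor{n/2},h)$ as given, then derives a closed form for that specific $\fk$ by generating functions. You instead build the universal tree and its universality proof from scratch, using a symmetric middle construction $L(\floor{n/2},h),\,U(n,h),\,L(\floor{n/2},h)$ whose width recurrence is $\fk(n,h)=2\fk(\floor{n/2},h)+\fk(n,h-1)$; note this is not the same function --- for even $n$ your right-hand copy has capacity $\floor{n/2}$ rather than the tighter $\ceil{n/2}-1$, so your $\fk$ is pointwise a bit larger (e.g.\ you get $\fk(2,2)=4$ versus the paper's $3$). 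That extra slack is harmless because, as you check, the closed form you obtain by unrolling and counting weak compositions, $\fk(n,h)=\sum_{j=0}^{\floor{\log_2 n}}2^j\floor{n/2^j}\binom{j+h-2}{h-2}$, together with $2^j\floor{n/2^j}\le n$ and the hockey-stick identity, still lands exactly on the target bound $n\binom{h-1+\floor{\log_2 n}}{\floor{\log_2 n}}$. From there your analytic steps (the bound $\binom{a+b}{b}\le(e(1+a/b))^b$, the monotonicity of $t\mapsto t\log_2(1+c/t)$ reducing to $\ln(1+x)(1+x)\ge x$, and the $o(1)$ absorption for $h=\omega(\log n)$) coincide with the paper's Lemmata~3--4 and the final computation. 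What your route buys is a self-contained elementary argument: no generating functions, and the universality is verified rather than imported. What the paper's route buys is the exact closed form for the optimal (over this family) $\fk$, which makes explicit that the upper bound is met with equality coefficient-by-coefficient when the binomials are replaced by their maximum, rather than only for $n$ a power of two.
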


	We remark that as the height $h$ we usually take (numbers close to) $d/2$, where $d$ is the number of priorities in a parity game.
	The value $2{.}45-\varepsilon$ in the exponent means that the actual constant is slightly smaller than $2{.}45$.

	Before giving a proof, let us compare the above bound with bounds known to date:
	Fijalkow~\cite[Theorem 4]{fijalkow} derives an upper bound of $2^{\ceil{\log_2 n}} \binom{h - 1 + \ceil{\log_2 n}}{\ceil{\log_2 n}}$.
	The difference is thus in replacing $\ceil{\log_2 n}$ by $\floor{\log_2 n}$.
	For ``a majority'' of $n$ these two values differ by $1$, in which case the quotient of the two binomial coefficients is
	$\frac{h - 1 + \ceil{\log_2 n}}{\ceil{\log_2 n}}$;
	this quotient is in $\Theta\left(\frac{h}{\log n}\right)$ if $h=\omega(\log n)$.
	Coming to the asymptotics, the bound of Fijalkow is in $\Oo(hn^{2{.}45-\varepsilon+\log_2(h/\log_2 n)})$ (see Jurdziński and Lazić~\cite[Lemma 6]{progress-measure} for a proof),
	which is indeed worse by a factor of $\Theta(h)$ than our bound.

	In order to prove \cref{thm:upper-bound}, we use a construction of Fijalkow~\cite[Theorem 4]{fijalkow}
	(which is essentially the same construction as in Jurdziński and Lazić~\cite{progress-measure}).
	He shows that there exists an $n$-universal tree of height $h$ and width $\fk(n, h)$, where the function $\fk$
	(extended to $n=0$ by setting $\fk(0, h)=0$, and to $h=0$, $n\geq 1$ by setting $\fk(n,0)=1$)
	is defined by the following recursive formula:
	\begin{align}
		\fk(0, h) &= 0&& \text{for } h\geq 0,\nonumber\\
		\fk(n, 0) &= 1&& \text{for } n\geq 1,\nonumber\\
		\fk(n, h) &= \fk(n, h - 1) + \fk(\floor{n / 2}, h) + \fk(n - 1 - \floor{n / 2}, h)&& \text{for } n, h \geq 1.
			\label{eq:fk-recursive}
	\end{align}

	Compared to Fijalkow~\cite{fijalkow}, we thus perform a more careful analysis of the above recursive formula.
	First, we provide an explicit formula for $\fk(n, h)$:

	\begin{lemma}\label{lem:fk-explicit}
		The function $\fk$ can be described by the following explicit formula, for all $n,h\geq1$:
		\begin{align*}
			\fk(n, h) = \sum_{i = 0}^{\lfloor \log_2 n \rfloor - 1} 2^i\cdot\binom{h - 1 + i}{h - 1}
				+ (n - 2^{\lfloor \log_2 n \rfloor} + 1)\cdot\binom{h - 1 + \lfloor \log_2 n \rfloor}{h - 1}.
		\end{align*}
	\end{lemma}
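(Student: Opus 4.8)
The plan is to prove the explicit formula by induction on $h$, using the recursion \eqref{eq:fk-recursive}. Let me write $\ell = \floor{\log_2 n}$ and denote the claimed right-hand side by $g(n,h)$. First I would dispose of the base case $h=1$: here $\binom{h-1+i}{h-1} = \binom{i}{0} = 1$ for every $i$, so $g(n,1) = \sum_{i=0}^{\ell-1} 2^i + (n - 2^\ell + 1) = (2^\ell - 1) + (n - 2^\ell + 1) = n$, which matches $\fk(n,1)$ (easily computed from the recursion: $\fk(n,1) = \fk(n,0) + \fk(\floor{n/2},1) + \fk(n-1-\floor{n/2},1) = 1 + \floor{n/2} + (n-1-\floor{n/2}) = n$, with $\fk(1,1)=1$ as the anchor).

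For the inductive step, assume the formula holds for $h-1$ (all $n\geq 1$) and compute $\fk(n,h) = \fk(n,h-1) + \fk(\floor{n/2},h) + \fk(n-1-\floor{n/2},h)$. The difficulty is that the last two terms still have second argument $h$, so a naive single induction on $h$ does not close. The clean way around this is a nested induction: for fixed $h$, induct on $n$ as well. Concretely, I would prove $\fk(n,h) = g(n,h)$ by induction on $n$ (with $h$ fixed, assuming the full statement for $h-1$). The cases $n=1$ and small $n$ are checked directly; for the step, $\floor{n/2}$ and $n-1-\floor{n/2}$ are both strictly less than $n$ (for $n\geq 2$), so the inductive hypothesis at height $h$ applies to them, and the hypothesis at height $h-1$ applies to $\fk(n,h-1)$.

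The main obstacle is then the bookkeeping to verify the algebraic identity
$$g(n,h-1) + g(\floor{n/2},h) + g(n-1-\floor{n/2},h) = g(n,h).$$
Here one must carefully track how $\floor{\log_2 n}$ relates to $\floor{\log_2\floor{n/2}}$ and $\floor{\log_2(n-1-\floor{n/2})}$, splitting into the cases $n$ even versus $n$ odd, and further according to whether $n$ is a power of $2$ (where $\ell$ drops by more than expected for the "left" child). In the generic case $2^\ell < n < 2^{\ell+1}$ one has $\floor{n/2}$ and $n-1-\floor{n/2}$ both with floor-log equal to $\ell - 1$, and the sum $\floor{n/2} + (n-1-\floor{n/2}) = n-1$ lets the two "remainder" terms $(\floor{n/2} - 2^{\ell-1} + 1)$ and $(n-1-\floor{n/2} - 2^{\ell-1}+1)$ combine to $(n - 2^\ell + 1)$, after which the Pascal identity $\binom{h-2+i}{h-2} + \binom{h-1+i}{h-1}$-style telescoping on the geometric-weighted sum produces exactly $g(n,h)$; the boundary case $n = 2^\ell$ needs a separate short check because then the "left" child has floor-log $\ell-1$ but the last summand of its sum already accounts for the full weight. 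These are all routine once set up, but they are the crux of the argument and must be done with care. $\qed$
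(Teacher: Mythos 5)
Your proposal is correct, but takes a genuinely different route from the paper. The paper proves the lemma via generating functions: it forms $\Fk(x,y)=\sum_{n,h\geq 1}\fk(n,h)x^ny^h$, turns the recursive definition of $\fk$ into a functional equation, passes to the ``difference'' series $H(x,y)=\frac{1-x}{y}\Fk(x,y)$, and solves a one-line recurrence to obtain the per-$n$ coefficients $h_n(y)=(1-y)^{-(\floor{\log_2 n}+1)}$, from which the binomial formula drops out by expanding $1/(1-y)^k$. Your proposal instead certifies the formula directly by a nested induction (outer on $h$, inner on $n$), reducing everything to the algebraic identity $g(n,h-1)+g(\floor{n/2},h)+g(n-1-\floor{n/2},h)=g(n,h)$, verified via Pascal's rule after splitting on whether $n$ is odd, even but not a power of two, or a power of two. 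In fact the paper explicitly remarks, just before its proof, that the lemma ``can be confirmed by an inductive proof'' and opts for generating functions as more instructive --- your proposal is precisely that inductive alternative fleshed out. Your setup is sound: the base case $h=1$ checks out, the inner induction on $n$ closes because $\floor{n/2}$ and $n-1-\floor{n/2}$ are smaller than $n$, Pascal's identity is indeed the engine of the telescoping, and you correctly flag the power-of-two boundary case (there $n-1-\floor{n/2}=2^{\ell-1}-1$ has floor-log $\ell-2$ rather than $\ell-1$, breaking the generic pattern --- note it is this ``right'' child, not the ``left'' one as you write, that causes the discrepancy). The remaining bookkeeping, which you defer, does go through in all cases. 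The trade-off versus the paper is the familiar one: your inductive route is elementary and self-contained but calls for a delicate case analysis and gives no hint of where the formula comes from; the generating-function route is slicker and more explanatory, effectively deriving the formula rather than merely verifying it, at the cost of heavier machinery.
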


	While it is possible to confirm the correctness of Lemma~\ref{lem:fk-explicit} by an inductive proof,
	we present here a proof based on generating functions, as it is more instructive.

	\begin{proof}[\cref{lem:fk-explicit}]
		Let $\Fk(x, y) = \sum_{n, h \geq 1} \fk(n, h) x^n y^h$ be a generating function of the two-dimensional sequence $\fk(n, h)$,
		excluding the values for $n = 0$ and for $h = 0$.
		We multiply both sides of \cref{eq:fk-recursive} by $x^n y^h$,
		and then we sum the result over all $n, h \geq 1$;
		we obtain that
		\begin{align*}
			\hspace{-1em}\Fk(x, y) & = \sum_{\substack{n \geq 1 \\ h \geq 1}} \fk(n, h - 1) x^n y^h + \sum_{\substack{k \geq 1 \\ h \geq 1}} (\fk(k, h) + \fk(k - 1, h)) x^{2 k} y^h\\
				&\hspace{18em} + \sum_{\substack{k \geq 0 \\ h \geq 1}} 2 \fk(k, h) x^{2 k + 1} y^h.
		\end{align*}
		Next, in the first sum above we shift $h$ by $1$, and we move components with $\fk(n,0)$ into a separate sum;
		in the second sum above we split $\fk(k, h) + \fk(k - 1, h)$ into two separate sums, where in the latter we shift $k$ by $1$,
		and we move components with $\fk(0,h)$ into a separate sum;
		in the last sum above, we move components for $k=0$ into a separate sum.
		We obtain
		\begin{align*}
			\hspace{-1em}\Fk(x, y) & = \sum_{n \geq 1} \fk(n, 0) x^n y + \sum_{\substack{n \geq 1 \\ h \geq 1}} \fk(n, h) x^n y^{h + 1}\\
				& \quad + \sum_{\substack{k \geq 1 \\ h \geq 1}} \fk(k, h) (x^2)^k y^h + \sum_{h \geq 1} \fk(0, h) x^2 y^h
					+ \sum_{\substack{k \geq 1 \\ h \geq 1}} \fk(k, h) (x^2)^k y^h x^2\\
				& \quad + \sum_{h \geq 1} 2 \fk(0, h) x y^h + \sum_{\substack{k \geq 1 \\ h \geq 1}} 2 \fk(k, h) (x^2)^k y^h x\\
				& = \frac{x y}{1 - x} + y \Fk(x, y) + \Fk(x^2, y) + 0 + x^2 \Fk(x^2, y) + 0 + 2 x \Fk(x^2, y).
		\end{align*}
		This gives us the following equation concerning the generating function:
		\begin{align}\label{F-func-equation}
			\Fk(x, y) = y \Fk(x, y) + \Fk(x^2, y) (1 + 2 x + x^2) + \frac{x y}{1 - x}.
		\end{align}

		Let $H(x, y) = \frac{1 - x}{y} \Fk(x, y)$; then $\Fk(x, y) = \frac{y}{1 - x} H(x, y)$.
		Note that $H$ is the generating function representing the differences between values of $\fk$ for successive values of $n$, with $h$ shifted by $1$.
		Substituting this to \cref{F-func-equation} we obtain:
		\begin{align*}
			\frac{y}{1 - x} H(x, y) &= \frac{y^2}{1 - x} H(x, y) + \frac{y}{1 - x^2} H(x^2, y) (1 + x)^2 + \frac{x y}{1 - x},\\
			\frac{y}{1 - x} H(x, y) &= \frac{y^2}{1 - x} H(x, y) + \frac{y}{1 - x} H(x^2, y) (1 + x) + \frac{x y}{1 - x},\\
			H(x, y) &= y H(x, y) + H(x^2, y) (1 + x) + x.
		\end{align*}
		Next, let us write $H(x, y) = \sum_{n \geq 1} x^n h_n(y)$.
		We substitute this to the equation above:
		\begin{align*}
			\sum_{n \geq 1} x^n h_n(y) = \sum_{n \geq 1} x^n y h_n(y) + \sum_{n \geq 1} (x^{2 n} + x^{2 n + 1}) h_n(y) + x.
		\end{align*}
		In order to find $h_n$, we compare coefficients in front of $x^n$, on both sides of the equation.
		For $n = 1$ we have
		\begin{align*}
			h_1(y) = y h_1(y) + 1,&&\mbox{so}&&h_1(y) = \frac{1}{1 - y},
		\end{align*}
		and for $n \geq 2$ we have
		\begin{align*}
			h_n(y) = y h_n(y) + h_{\lfloor n / 2 \rfloor}(y),&&\mbox{so}&&h_n(y) = \frac{h_{\lfloor n / 2 \rfloor}(y)}{1 - y}.
		\end{align*}
		It follows that for all $n\geq 1$ we have a formula
		\begin{align*}
			h_n(y) = \frac{1}{(1 - y)^{\lfloor \log_2 n \rfloor + 1}}.
		\end{align*}

		Below, we use the notation $[x^n]A(x)$ for the coefficient in front of $x^n$ in the function $A(x)$.
		We also use the following formula, where $k\geq 1$:
		\begin{align}\label{eq:geom-power}
			\frac{1}{(1 - x)^k} = \sum_{n = 0}^\infty \binom{n + k - 1}{n} x^n.
		\end{align}

		We now conclude the proof (here we assume that $n,h\geq 1$):
		\begin{align*}
			\fk(&n, h)  = [x^n y^h] \Fk(x, y) = [x^n y^h] \frac{y}{1 - x} H(x, y)\\
			& = [y^{h - 1}] \sum_{j = 0}^n [x^j] H(x, y) = [y^{h - 1}] \sum_{j = 1}^n h_j(y)\\
			& = [y^{h - 1}] \left( \sum_{i = 0}^{\lfloor \log_2 n \rfloor - 1} 2^i h_{2^i}(y) + (n - 2^{\lfloor \log_2 n \rfloor} + 1)  h_{2^{\lfloor \log_2 n \rfloor}}(y) \right)\\
			& = \sum_{i = 0}^{\lfloor \log_2 n \rfloor - 1} 2^i [y^{h - 1}] \frac{1}{(1 - y)^{i + 1}}
				+ (n - 2^{\lfloor \log_2 n \rfloor} + 1) [y^{h - 1}] \frac{1}{(1 - y)^{\lfloor \log_2 n \rfloor + 1}}\\
			& = \sum_{i = 0}^{\lfloor \log_2 n \rfloor - 1} 2^i\cdot\binom{h - 1 + i}{h - 1} + (n - 2^{\lfloor \log_2 n \rfloor} + 1)\cdot\binom{h - 1 + \lfloor \log_2 n \rfloor}{h - 1}.
		\end{align*}
		Above, the third line is obtained based on the fact that $h_{2^i} = h_{2^i + 1} = \ldots = h_{2^{i + 1} - 1}$,
		and the last line is obtained based on \cref{eq:geom-power}.
		This finishes the proof of \cref{lem:fk-explicit}.
	\end{proof}

	Next, we give two auxiliary lemmata, useful while bounding the asymptotics:

	\begin{lemma}\label{lem:ineq1}
		For all $x\geq 0$ it holds that $\ln(1 + x) \cdot (1 + x) \geq x$.
	\end{lemma}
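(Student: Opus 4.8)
The plan is to reduce the inequality to the monotonicity of a single one-variable function. I would set $g(x) = (1 + x)\ln(1 + x) - x$ for $x \geq 0$, so that the claim is precisely $g(x) \geq 0$ on $[0,\infty)$. First I would note that $g(0) = 0$. Then I would differentiate, using the product rule: $g'(x) = \ln(1 + x) + (1 + x)\cdot\frac{1}{1 + x} - 1 = \ln(1 + x)$. Since $\ln(1 + x) \geq 0$ for every $x \geq 0$, the function $g$ is nondecreasing on $[0,\infty)$, and hence $g(x) \geq g(0) = 0$ for all $x \geq 0$, which is exactly the asserted inequality.

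An equivalent route, if one prefers to avoid differentiation, is to substitute $t = 1 + x \geq 1$ and rewrite the claim as $\ln t \geq 1 - \frac{1}{t}$; this is the standard logarithm estimate, obtained for instance from $\ln s \leq s - 1$ (valid for all $s > 0$) applied with $s = \frac{1}{t}$. In either case the argument is completely elementary, so there is no genuine obstacle here; the lemma is merely a convenient calculus fact, recorded for later use when converting the explicit binomial expression of \cref{lem:fk-explicit} into the asymptotic bound claimed in \cref{thm:upper-bound}.
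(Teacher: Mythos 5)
Your main argument (define $g(x) = (1+x)\ln(1+x) - x$, check $g(0)=0$, compute $g'(x) = \ln(1+x) \geq 0$, conclude monotonicity) is correct and is precisely the paper's own proof, down to the same derivative computation. The alternative substitution route you sketch is also valid but not needed.
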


	\begin{proof}
		Denote $h(x) = \ln(1 + x) \cdot (1 + x) - x$.
		Because $h(0) = 0$, it is enough to prove that the function $h$ is increasing.
		To this end, we compute its derivative.
		We obtain $h'(x) = \ln(1 + x)$, so $h'(x) > 0$ for $x > 0$ and indeed the function $h$ is increasing.
	\end{proof}

	\begin{lemma}\label{lem:ineq2}
		For every $c\geq 0$ the function
		\begin{align*}
			\alpha_c(x) = \left(1 + \frac{c}{x} \right)^x
		\end{align*}
		is nondecreasing for $x > 0$.
	\end{lemma}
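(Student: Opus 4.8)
The plan is to take a logarithm and reduce everything to \cref{lem:ineq1}. If $c=0$, then $\alpha_c$ is constantly equal to $1$ and hence trivially nondecreasing, so from now on I assume $c>0$. Since the exponential function is increasing, it suffices to prove that $g(x)=\ln\alpha_c(x)=x\ln\!\left(1+\frac{c}{x}\right)$ is nondecreasing for $x>0$; and since $g$ is differentiable there, it is enough to show that $g'(x)\geq 0$ for all $x>0$.

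Next I would simply compute the derivative. Using the product rule and the chain rule,
\begin{align*}
	g'(x) = \ln\!\left(1+\frac{c}{x}\right) + x\cdot\frac{-c/x^2}{1+c/x}
		= \ln\!\left(1+\frac{c}{x}\right) - \frac{c/x}{1+c/x}.
\end{align*}
Now substitute $t=\frac{c}{x}$, which ranges over all of $(0,\infty)$ as $x$ does; the claim $g'(x)\geq 0$ becomes $\ln(1+t)-\frac{t}{1+t}\geq 0$, and multiplying through by the positive factor $1+t$ turns this into $\ln(1+t)\cdot(1+t)\geq t$. This is exactly \cref{lem:ineq1} (applied with $x:=t$). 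Hence $g'(x)\geq 0$ for every $x>0$, so $g$ is nondecreasing, and therefore $\alpha_c=e^{g}$ is nondecreasing as well, which finishes the proof.

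I do not anticipate any genuine obstacle: the argument is a one-line reduction once the derivative is written down. The only thing to be slightly careful about is the degenerate case $c=0$ and making sure the substitution $t=c/x$ sweeps out the full range $(0,\infty)$, so that the single inequality from \cref{lem:ineq1} indeed covers all $x>0$.
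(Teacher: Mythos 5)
Your proof is correct and is essentially the paper's own argument in a slightly different dress: the paper differentiates $\alpha_c$ directly and factors out $\alpha_c(x)\geq 0$, whereas you take the logarithm first, but the quantity you must show nonnegative, $\ln(1+c/x)-\frac{c/x}{1+c/x}$, is exactly the parenthesized factor in the paper's $\alpha_c'$, and both proofs then reduce it to \cref{lem:ineq1} by multiplying through by $1+c/x$.
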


	\begin{proof}
		We compute the derivative of $\alpha_c$:
		\begin{align*}
			\alpha'_c(x) = \alpha_c(x)\cdot \left( \ln \left( 1 + \frac{c}{x} \right) - \frac{c}{x\cdot \left(1 + \frac{c}{x} \right)} \right).
		\end{align*}
		Because $\alpha_c(x) \geq 0$ for $x > 0$, in order to confirm that $\alpha'_c(x) \geq 0$ it is enough to check that
		\begin{align*}
			\ln \left( 1 + \frac{c}{x} \right) \geq \frac{c}{x\cdot \left(1 + \frac{c}{x} \right)}.
		\end{align*}
		To show this inequality, we multiply to both its sides $1 + \frac{c}{x}$ and we use \cref{lem:ineq1}.
	\end{proof}

	We are now ready to finish the proof of \cref{thm:upper-bound}:

	\begin{proof}[\cref{thm:upper-bound}]
		In order to obtain $\fk(n, h) \leq n\cdot \binom{h - 1 + \floor{\log_2 n}}{\floor{\log_2 n}}$,
		we replace all binomial coefficients in the formula of \cref{lem:fk-explicit} by $\binom{h - 1 + \floor{\log_2 n}}{\floor{\log_2 n}}$;
		obviously $\binom{h - 1 + i}{h - 1} \leq \binom{h - 1 + \floor{\log_2 n}}{h - 1}=\binom{h - 1 + \floor{\log_2 n}}{\floor{\log_2 n}}$ for $i \leq \floor{\log_2 n}$.

		Recall that $x^{\log_2 y}=y^{\log_2 x}$ for any $x,y>0$.
		The second inequality from the theorem's statement is obtained using, consecutively,
		the estimation $\binom{n}{k} \leq \left( \frac{e n}{k} \right)^k$, the inequality $\log_2 e < 1{.}45$,
		and \cref{lem:ineq2}:
		\begin{align*}
			\fk(n,h)&\leq n\cdot\binom{h - 1 + \lfloor \log_2 n \rfloor}{\lfloor \log_2 n \rfloor}
				\leq n\cdot\left( e \left( 1 + \frac{h - 1}{\lfloor \log_2 n \rfloor} \right) \right)^{\lfloor \log_2 n \rfloor} \\
				&\leq n\cdot e^{\log_2 n}\cdot\left( 1 + \frac{h - 1}{\lfloor \log_2 n \rfloor} \right)^{\lfloor \log_2 n \rfloor}
				= n^{1 + \log_2 e}\cdot\left( 1 + \frac{h - 1}{\lfloor \log_2 n \rfloor} \right)^{\lfloor \log_2 n \rfloor} \\
				&\leq n^{2{.}45-\varepsilon}\cdot \left( 1 + \frac{h - 1}{\log_2 n} \right)^{\log_2 n}
				= n^{2{.}45-\varepsilon + \log_2 \left( 1 + \frac{h - 1}{\log_2 n} \right)}.
		\end{align*}

		Assume now that $h=\omega(\log n)$.
		Then
		\begin{align*}
			\log_2\left(1+\frac{h-1}{\log_2 n}\right)&=\log_2\left(\frac{h}{\log_2n}\cdot(1+o(1))\right)\\
			&=\log_2\left(\frac{h}{\log_2n}\right)+\log_2(1+o(1)).
		\end{align*}
		The component $\log_2(1+o(1))=o(1)$ can be removed at the cost of decreasing the constant $\varepsilon$, hence we obtain that
		\begin{align*}
			\fk(n,h)\leq n^{2{.}45-\varepsilon + \log_2 \left( 1 + \frac{h - 1}{\log_2 n} \right)}
				=\Oo\left(n^{2{.}45-\varepsilon + \log_2 \left(\frac{h}{\log_2 n} \right)}\right).
		\tag*{\qed}\end{align*}
	\end{proof}

\section{Using smaller trees}

	Papers showing how to solve parity games with use of universal trees~\cite{fijalkow,Strahler-number,Zielonka-Parys-universal,symmetric-lifting}
	assume that in order to solve games with $n$ vertices one needs $n$-universal trees.
	And for games with priorities on edges it can be shown that $n$-universality is indeed required~\cite{universal-trees,fixpoints}.
	However most papers (including ours) define parity games with priorities on vertices, in which case the necessity proof does not apply.

	Let us come back to the paper of Jurdziński and Lazić~\cite{progress-measure}.
	Although it does not mention universal trees explicitly, it uses a very particular universal tree,
	called succinct tree coding.
	The important point is that this tree is not $n$-universal, but rather $\eta$-universal, where $\eta$ can be either the number of vertices of odd priority,
	or the number of vertices of even priority, whatever is smaller, so clearly $\eta\leq\floor{n/2}$
	(formally, throughout their paper $\eta$ denotes the number of vertices of even priority,
	but they explain at the beginning of Section 4 that priorities can be shifted by $1$ ensuring that $\eta\leq\floor{n/2}$).
	Moreover, by looking into their paper one can see that the only ``interface'' between Section 2, defining the succinct tree coding,
	and later sections, using the coding, is Lemma 1, where it is shown that the succinct tree coding is an $\eta$-universal tree.
	It is then easy to see that the algorithm of Jurdziński and Lazić works equally well with any other $\eta$-universal tree (of appropriate height, namely $d/2$)
	in place of the succinct tree coding.
	In particular, we can use the universal trees from our \cref{thm:upper-bound}, being of smaller width.

	Let us now bound the width of a universal tree that is needed for the algorithm, comparing it with previous approaches.
	In order to avoid the additional parameter $\eta$, we replace it in the sequel by $\floor{n/2}$, making use of the inequality $\eta\leq\floor{n/2}$.
	Substituting $d/2$ for $h$ and $\floor{n/2}$ for $n$ in the formula from \cref{thm:upper-bound}, we obtain that the width of an $\floor{n/2}$-universal tree $\Tt$ of height $d/2$ can satisfy
	\begin{align*}
		|\Tt| \leq \left\lfloor\frac{n}{2}\right\rfloor\cdot\binom{d/2 - 1 + \floor{\log_2 \floor{n/2}}}{\floor{\log_2 \floor{n/2}}}.
	\end{align*}
	To compare, for an $n$-universal tree $\Tt'$ by \cref{thm:upper-bound} we have
	\begin{align*}
		|\Tt'| \leq n\cdot\binom{d/2 - 1 + \floor{\log_2 n}}{\floor{\log_2 n}}.
	\end{align*}
	For natural $n$ we always have $\floor{\log_2 \floor{n/2}}=\floor{\log_2(n/2)}=\floor{\log_2 n}-1$,
	so the quotient of the two binomial coefficients is $\frac{d/2 - 1 + \floor{\log_2 n}}{\floor{\log_2 n}}$,
	which is in $\Theta\left(\frac{d}{\log n}\right)$ if $d=\omega(\log n)$.

	Let us now determine the asymptotics for $d=\omega(\log n)$.
	First, let us simplify the formula for the $n$-universal tree $\Tt'$:
	\begin{align*}
		|\Tt'|=\Oo(n^{2{.}45-\varepsilon+\log_2(d/2/\log_2 n)})
			=\Oo(n^{1{.}45-\varepsilon+\log_2(d/\log_2 n)}).
	\end{align*}
	For the $\floor{n/2}$-universal tree $\Tt$ we thus have
	\begin{align}\label{eq:4}
		|\Tt|=\Oo\left(\left\lfloor\frac{n}{2}\right\rfloor^{1{.}45-\varepsilon+\log_2(d/\log_2 \floor{n/2})}\right)
			=\Oo\left(\left(\frac{n}{2}\right)^{1{.}45-\varepsilon+\log_2(d/\log_2 n)}\right).
	\end{align}
	The second equality above is obtained by replacing $\left\lfloor\frac{n}{2}\right\rfloor$ with the slightly greater value of $\frac{n}{2}$, and by observing that
	\begin{align*}
		\log_2\left(\frac{d}{\log_2\floor{n/2}}\right)&=\log_2\left(\frac{d}{\log_2n}\cdot(1+o(1))\right)\\
		&=\log_2\left(\frac{d}{\log_2n}\right)+\log_2(1+o(1));
	\end{align*}
	the component $\log_2(1+o(1))=o(1)$ can be removed at the cost of decreasing the constant $\varepsilon$.
	We continue by analysing the logarithm of the bound:
	\begin{align}
		&\log_2\left(\frac{n}{2}\right)\cdot\left(1{.}45-\varepsilon+\log_2\frac{d}{\log_2 n}\right)
		=\left(\log_2n-1\right)\cdot\left(1{.}45-\varepsilon+\log_2\frac{d}{\log_2 n}\right)\nonumber\\
		&\hspace{5em}\leq\log_2n\cdot\left(1{.}45-\varepsilon+\log_2\frac{d}{\log_2 n}\right)-\log_2\frac{d}{\log_2 n}+\varepsilon\nonumber\\
		&\hspace{5em}\leq\log_2n\cdot\left(1{.}45-\varepsilon+o(1)+\log_2\frac{d}{\log_2 n}\right)-\log_2d+\varepsilon;\label{eq:5}
	\end{align}
	the last equality above was obtained by observing that
	\begin{align*}
		-\log_2\frac{d}{\log_2 n}=-\log_2d+\log_2\log_2n=-\log_2d+\log_2n\cdot o(1).
	\end{align*}
	Combining \cref{eq:4,eq:5}, we obtain the following bound on $|\Tt|$:
	\begin{align*}
		|\Tt|=\Oo\left(n^{1{.}45-\varepsilon+o(1)+\log_2(d/\log_2 n)}\cdot\frac{1}{d}\right).
	\end{align*}
	The $o(1)$ component can be removed at the cost of decreasing the constant $\varepsilon$ again;
	we can thus write
	\begin{align*}
		|\Tt|=\Oo\left(n^{1{.}45-\varepsilon+\log_2(d/\log_2 n)}\cdot\frac{1}{d}\right).
	\end{align*}

	We now come to the complexity of the algorithm itself.
	As observed by Jurdziński and Lazić~\cite[Theorem 7]{progress-measure}, their algorithm, when using a universal tree $\widehat\Tt$,
	works in time $\Oo(m\cdot\log n\cdot\log d\cdot|\widehat\Tt|)$.
	Using $\Tt$ as $\widehat\Tt$, and observing that the polylogarithmic function $O(\log n\cdot\log d)$ can be again ``eaten'' by the $-\varepsilon$ component of the exponent,
	we obtain the following bound on the complexity, which is our main theorem:

	\begin{theorem}
		For $d=\omega(\log n)$ one can find a winner in a parity game with $n$ vertices, $m$ edges, and $d$ priorities in time
		\begin{align*}
			\Oo\left(m\cdot n^{1{.}45+\log_2(d/\log_2 n)}\cdot\frac{1}{d}\right).
		\end{align*}
	\end{theorem}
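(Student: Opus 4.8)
The plan is to run the progress-measure algorithm of Jurdziński and Lazić~\cite{progress-measure}, but to feed it a universal tree supplied by \cref{thm:upper-bound} in place of the succinct tree coding, and then to carry out the asymptotic accounting. Two ingredients are combined. First, for parity games with priorities \emph{in vertices} the algorithm does not need a full $n$-universal tree: after shifting all priorities by $1$ one may assume that the number $\eta$ of even-priority vertices is at most $\floor{n/2}$, and by~\cite[Lemma~1 and Theorem~7]{progress-measure} (as discussed in the previous section) the algorithm works with \emph{any} $\eta$-universal tree $\widehat\Tt$ of height $d/2$, running in time $\Oo(m\cdot\log n\cdot\log d\cdot|\widehat\Tt|)$. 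Second, \cref{thm:upper-bound} provides such a tree of small width. Thus the proof reduces to bounding $|\widehat\Tt|$, and then to observing that the polylogarithmic factor $\log n\cdot\log d$ is ``eaten'' by the $n^{-\varepsilon}$ that will appear in the width bound.

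Concretely, I would take $\widehat\Tt=\Tt$, where $\Tt$ comes from \cref{thm:upper-bound} with height $h=d/2$ and with $\floor{n/2}$ substituted for $n$; this yields
\[
  |\Tt|\le\floor{n/2}\cdot\binom{d/2-1+\floor{\log_2\floor{n/2}}}{\floor{\log_2\floor{n/2}}}.
\]
Since $d=\omega(\log n)$ we also have $h=d/2=\omega(\log n)$, so the ``Additionally'' clause of \cref{thm:upper-bound} applies. Using the identities $\floor{\log_2\floor{n/2}}=\floor{\log_2 n}-1$ and $\log_2\floor{n/2}=\log_2 n\,(1+o(1))$, and merging the resulting $\log_2(1+o(1))=o(1)$ into a slightly smaller $\varepsilon$, I would simplify the above to
\[
  |\Tt|=\Oo\!\left(\left(\tfrac n2\right)^{1{.}45-\varepsilon+\log_2(d/\log_2 n)}\right)
\]
for some $\varepsilon>0$.

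The technical core --- and the step I expect to be the main obstacle, although it is bookkeeping rather than anything conceptual --- is extracting the extra factor $\tfrac1d$ out of the base $\tfrac n2$. I would argue on the logarithm of the bound: expanding
\[
  \log_2(n/2)\cdot\bigl(1{.}45-\varepsilon+\log_2(d/\log_2 n)\bigr)=(\log_2 n-1)\cdot\bigl(1{.}45-\varepsilon+\log_2(d/\log_2 n)\bigr),
\]
distributing the $-1$, discarding the harmless $-1{.}45$, and rewriting $-\log_2(d/\log_2 n)=-\log_2 d+\log_2\log_2 n$ with $\log_2\log_2 n=\log_2 n\cdot o(1)$, I bound the exponent by $\log_2 n\cdot\bigl(1{.}45-\varepsilon+o(1)+\log_2(d/\log_2 n)\bigr)-\log_2 d+\varepsilon$. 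This says $|\Tt|=\Oo\!\left(n^{1{.}45-\varepsilon+o(1)+\log_2(d/\log_2 n)}/d\right)$ (the constant $2^{\varepsilon}$ is absorbed into $\Oo(\cdot)$), and the $o(1)$ in the exponent is in turn absorbed by shrinking $\varepsilon$ once more --- legitimate since $o(1)$ eventually drops below any fixed positive constant, so a fixed $\varepsilon>0$ survives. Hence $|\Tt|=\Oo\!\left(n^{1{.}45-\varepsilon+\log_2(d/\log_2 n)}/d\right)$.

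Finally, I would plug $\Tt$ into the Jurdziński--Lazić algorithm, obtaining running time $\Oo(m\cdot\log n\cdot\log d\cdot|\Tt|)=\Oo\!\left(m\cdot\log n\cdot\log d\cdot n^{1{.}45-\varepsilon+\log_2(d/\log_2 n)}/d\right)$; since $\log n\cdot\log d$ is polylogarithmic it is dominated by $n^{\varepsilon}$ for large $n$ and so is absorbed by the $n^{-\varepsilon}$ already present, leaving exactly $\Oo\!\left(m\cdot n^{1{.}45+\log_2(d/\log_2 n)}/d\right)$. The only delicate point in the whole argument is the $\varepsilon$-bookkeeping of the third paragraph: repeatedly merging $o(1)$ terms and additive constants into $\varepsilon$ while keeping it a fixed positive number.
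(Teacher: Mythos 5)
Your proposal is correct and follows essentially the same route as the paper: substitute $h=d/2$ and $\floor{n/2}$ for $n$ into \cref{thm:upper-bound}, take logarithms and expand $(\log_2 n - 1)$ to peel off a $1/d$ factor, merge the $o(1)$ terms into $\varepsilon$, and finally absorb the $\Oo(\log n\cdot\log d)$ overhead of the Jurdziński--Lazić algorithm into $n^{-\varepsilon}$. The $\varepsilon$-bookkeeping you flag as delicate is handled in exactly the same way in the paper.
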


	\begin{remark}
		Compared to the previous bound of $\Oo\left(m\cdot d\cdot n^{1{.}45+\log_2(d/\log_2 n)}\right)$ (from Fijalkow~\cite{fijalkow}),
		we obtain an improvement by a factor of $\Theta(d^2)$.
		One $\Theta(d)$ is gained in \cref{thm:upper-bound}, by improving the bound for the size of a universal tree.
		A second $\Theta(d)$ is gained by using $\floor{n/2}$-universal trees instead of $n$-universal trees.
	\end{remark}

	\begin{remark}
		The complexity obtained in Fijalkow~\cite{fijalkow} is the same as in Jurdziński and Lazić~\cite{progress-measure}:
		he gains a factor of $\Theta(d)$ by using a better construction of a universal tree instead of the ``succinct tree coding'',
		but simultaneously he loses a factor of $\Theta(d)$ due to using $n$-universal trees in place of $\floor{n/2}$-universal trees.

		Let us also use this place to note that there is a small mistake in the paper of Jurdziński and Lazić~\cite{progress-measure}.
		Namely, in the proof of Lemma 6 they switch to analysing a simpler expression $\binom{\ceil{\log_2\eta}+d/2}{\ceil{\log_2\eta}}$
		in place of $\binom{\ceil{\log_2\eta}+d/2+1}{\ceil{\log_2\eta}+1}$, saying that the simpler expression is within a constant factor of the latter one.
		This statement is false, though: the quotient of the two expressions is
		$\frac{\ceil{\log_2\eta}+d/2+1}{\ceil{\log_2\eta}+1}$, which is in $\Theta\left(\frac{d}{\log \eta}\right)$ if $d=\omega(\log\eta)$.
		Thus, the actual complexity upper bound resulting from their analysis of the algorithm for $d=\omega(\log\eta)$ should not be
		$\Oo\left(m\cdot d\cdot\eta^{1{.}45+\log_2(d/\log_2\eta)}\right)$ as they claim,
		but rather $\Oo\left(m\cdot d^2\cdot\eta^{1{.}45+\log_2(d/\log_2\eta)}\right)$
		(and taking $n/2$ for $\eta$ this gives us the complexity $\Oo\left(m\cdot d\cdot n^{1{.}45+\log_2(d/\log_2 n)}\right)$, the same as in Fijalkow~\cite{fijalkow}).
	\end{remark}
	
	\begin{remark}
		While the current paper was under preparation,
		Dell'Erba and Schewe~\cite{Dell_Erba_2022} published some other improvement of Jurdziński and Lazić's algorithm~\cite{progress-measure}.
		A complete discussion is difficult to perform, since the authors have not provided a precise bound.
	\end{remark}

\bibliography{bib}

\end{document}